\begin{document}

\title{Convolution and Correlation Theorems for Wigner-Ville Distribution Associated with the Quaternion Offset Linear
Canonical  Transform}
\subtitle{Convolution and Correlation Theorems for WVD  Associated with the QOLCT}

\titlerunning{Convolution and Correlation Theorems}        

\author{M. YOUNUS BHAT        \and
        AAMIR H. DAR 
}


\institute{M. YOUNUS BHAT \at
              Department of  Mathematical Sciences,  Islamic University of Science and Technology Awantipora, Pulwama, Jammu and Kashmir 192122, India.
              \email{gyounusg@gmail.com}           
           \and
           AAMIR H. DAR \at
               Department of  Mathematical Sciences,  Islamic University of Science and Technology Awantipora, Pulwama, Jammu and Kashmir 192122, India.
}

\date{Received: date / Accepted: date}

\maketitle

\begin{abstract}
The quaternion offset linear canonical transform(QOLCT) has gained much popularity in recent years because of its applications in many areas, including color
image and signal processing. At the same time  the applications of Wigner-Ville distribution (WVD) in signal analysis and image processing can not be excluded. In this paper we investigate the Winger-Ville Distribution associated with quaternion  offset linear canonical transform (WVD-QOLCT). Firstly, we propose the  definition
of the WVD-QOLCT, and then several important properties of newly defined WVD-QOLCT, such as nonlinearity, bounded, reconstruction formula, orthogonality relation and Plancherel formula are derived. Secondly a novel canonical convolution operator and a related correlation operator for WVD-QOLCT are proposed. Moreover, based on the proposed operators, the corresponding generalized convolution, correlation  theorems are studied.We also show that the convolution and  correlation  theorems of the QWVD and WVD-QLCT can be looked as a special case of our achieved results.
\keywords{Quaternion algebra \and Offset linear canonical transform \and  Quaternion offset linear canonical transform \and Wigner-Ville distribution \and Convolution \and Correlation\and  Modulation}
\subclass{11R52 \and  42C40 \and  42C30\and 43A30.}
\end{abstract}

\section{Introduction}
\label{intro}
In the time-frequency signal analysis the classical Wigner-Ville distribution (WVD) or Wigner- Ville transform (WVT) has an important role to play.  Eugene Wigner introduced the concept WVD while making his calculation of the quantum corrections. later on it was J. Ville who derived it independently as a quadratic representation of the local time-frequency energy of a signal in 1948. Many important properties of WVT has been studied by  many authors. On replacing the kernel of the classical Fourier transform (FT) with the kernel of the LCT in the WVD domain, this transform can be extended to the domain of linear canonical transform \cite{3}-\cite{6},  \cite{13}-\cite{19}. 

On the other hand the quaternion Fourier transform (QFT) is of the interest in the present era. Many important properties like shift, modulation, convolution, correlation, differentiation, energy conservation, uncertainty principle of QFT have been found. Many generalized transforms are closely related to the QFTs, for example, the quaternion wavelet transform, fractional quaternion Fourier transform, quaternion linear canonical transform, and quaternionic windowed Fourier transform. Based on the QFTs, one also may extend the WVD to the quaternion algebra while enjoying similar properties as in the classical case. Many authors  generalized  the classical WVD to quaternion algebra, which they called as  the quaternion Wigner-Ville distribution (QWVD). For more details we refer to \cite{1}, \cite{2}, \cite{7}-\cite{12}. 

The linear canonical transform (LCT) with four parameters $(a, b, c, d)$ has been generalized to a six parameter transform  $(a,b,c,d,u_0,w_0)$ known as offset linear canonical transform (OLCT). Due to the time shifting $u_0$ and frequency modulation parameters, the OLCT has gained more flexibility over classical LCT. Hence has found wide applications in image and signal processing.  On the other side the convolution has some applications in various areas of Mathematics like linear algebra, numerical analysis and signal processing. Where as Correlation like convolution is an another important tool n signal processing, optics and detection applications. In the domains of LCT, WVD and OLCT the convolution and correlation operations have been studied \cite{7}-\cite{10}.

The quaternion offset linear canonical transform\\ (QOLCT) has gained much popularity in recent years because of its applications in many areas, including colour
image and signal processing. At the same time  the applications of Wigner-Ville distribution (WVD) in signal analysis and image processing can not be excluded. Motivated by QOLCT and WVD, we in this paper we investigate the Winger-Ville Distribution associated with quaternion  offset linear canonical transform (WVD-QOLCT). Firstly, we propose the  definition of the WVD-QOLCT, and then several important properties of newly defined WVD-QOLCT, such as nonlinearity, bounded, reconstruction formula, orthogonality relation and Plancherel formula are derived. Secondly a novel canonical convolution operator and a related correlation operator for WVD-QOLCT are proposed. Moreover, based on the proposed operators, the corresponding generalized convolution, correlation  theorems are studied.We also show that the convolution and  correlation  theorems of the QWVD and WVD-QLCT can be looked as a special case of our achieved results.

The paper is organised as follows. In Section \ref{sec 3}, we provide  the definition of  Wigner-Ville distribution associated with the quaternionic offset linear canonical transform (WVD-QOLCT). Then we will investigate several basic properties of the WVD-QOLCT  which are important for signal representation in signal processing. In Section \ref{sec 4} we first define the convolution and correlation for the QOLCT. We then establish the new convolution and correlation for the WVD-QOLCT.We also show that the convolution theorems of the QWVD and WVD-QLCT can be looked as a special case of our achieved results
I
\section{Winger-ville Distribution associated with  Quaternion Offset Linear \\ Canonical Transform(WVD-QOLCT)}\label{sec 3}
 Since in practice most natural signals are non-stationary. In order to study a non-stationary signals the
Wigner-Ville distribution has become a suite tool for the analysis of the non stationary
signals. In this section,we are going to give the definition of  Wigner-Ville distribution associated
with the quaternionic offset linear canonical transform (WVD-QOLCT),then we will investigate several basic properties
of the WVD-QOLCT  which are important for signal representation
in signal processing.
\begin{definition}\label{def 3.1}Let $A_i =\left[\begin{array}{cccc}a_i & b_i &| & r_i\\c_s & d_i &| & s_i \\\end{array}\right] $, be a matrix parameter such that $a_s$, $b_i$, $c_i$, $d_i$, $r_i$, $s_i \in \mathbf R$ and $ a_id_i-b_ic_i=1,$ for $i=1,2.$ The Wigner-Ville distribution associated with the two-sided quaternionic offset linear
canonical transform (WVD-QOLCT) of  signals $f,g\in L^2(\mathbf R^2,\mathbf H)$, is given by\\\\
\begin{equation}\label{3.1}
{\mathcal W}_{f,g }^{A_1,A_2}(t,u)=
\left\{
  \begin{array}{ll}
   \int_{{\mathbf R}^2}{K^{i }_{A_1}(n_1,u_1)}f(t+\frac{n}{2})\overline{g(t-\frac{n}{2})}\\\
   \quad K^{j }_{A_2}(n_2,u_2)dn,\\\
   \quad b_1,b_2\ne 0,   \\\
    \sqrt{d_1}{e}^{i (\frac{c_1d_1}{2}{\left(u_1-r_1\right)}^2+{\ u}_1r_1)}\\\
    \quad f(t_1+\frac{d_1{(u}_1-r_1)}{2},t_2+\frac{n_2}{2})\\\
\quad \overline{g}(t_1-\frac{d_1{(u}_1-r_1)}{2},t_2-\frac{n_2}{2})\\\
K^{j }_{A_2}\left(n_2,u_2\right),\\\
 \quad b_1=0,b_2\ne 0; \\\
    \sqrt{d_2}{K^{i }_{A_1}\left(n_1,u_1\right)}\\\
    \quad f(t_1+\frac{n_1}{2},t_2+\frac{d_2{(u}_2-r_2)}{2})\\\
     \quad \overline{g}(t_1-\frac{n_1}{2},t_2-\frac{d_2{(u}_2-r_2)}{2})\\\
     \quad   e^{j (\frac{c_2d_2}{2}{(u_2-r_2)}^2+{\ u}_2{r }_{2)}},\\\
       \quad b_1\ne 0, b_2=0; \\\
    \sqrt{d_1d_2}e^{i(\frac{c_1d_1}{2}{\left(u_1-r_1\right)}^2+{u}_1r_1)}\\\
   \quad  f(t_1+\frac{d_1({u}_1-{r}_1}{2}),t_2+\frac{d_2(u_2-r_2}{2})) \\\
    \quad  \overline{g} (t_1-\frac{d_1({u}_1-{r}_1}{2}),t_2-\frac{d_2(u_2-r_2}{2}))\\\
      \quad   e^{j(\frac{c_2d_2}2{(u_2-r_2)}^2+{u}_2{r}_2)}, \\\
         \quad b_1=b_2=0.
  \end{array}
\right.\end{equation}
where $t=(t_{1},t_{2}), u= (u_1,u_2),n=(n_1,n_2)$ and  $K_{A_1}^i(n_1,u_1)$ and $K_{A_2}^j(n_2,u_2)$ are the quaternion kernels.
\end{definition}
\begin{note}
 If $f=g$ then $\mathcal{W}_{f,f }^{A_1,A_2}(t,u)$ we call it the Auto WVD-QOLCT.Otherwise is is called Cross  WVD-QOLCT\end{note}

Without loss of generality we will deal with the case $b_i\neq 0,\, i=1,2,$ as in other cases proposed transform reduces to a chrip multiplications.Thus for any $f,g\in L^2(\mathbf R^2,\mathbf H)$ we have
\begin{eqnarray}
\nonumber\mathcal{W}_{f,g }^{A_1,A_2}(t,u)&=&\int_{{\mathbf R}^2}{K^{i }_{A_1}(n_1,u_1)}f(t+\frac{n}{2})\overline{g(t-\frac{n}{2})}K^{j }_{A_2}(n_2,u_2)dn\\\
\nonumber&=&\mathcal{O}^{i,j}_{A_1,A_2}\left\{f\left( t+\frac{n}{2}\right)\overline{g\left(t-\frac{n}{2}\right)}\right\}\\\
\label{3.2}&=&\mathcal{O}^{i,j}_{A_1,A_2}\{h_{f,g}(t,n)\}.
\end{eqnarray}
Where $h_{f,g}(t,n)=f\left( t+\frac{n}{2}\right)\overline{g\left(t-\frac{n}{2}\right)}$ is known as quaternion correlation product.
Applying the inverse QOLCT to (\ref{3.2}), we get
$$\{h_{f,g}(t,n)\}=\{\mathcal O_{A_1,A_2}^{i,j}\}^{-1}\{\mathcal  W_{f,g}^{A_1,A_2}(t,u)\}$$
which implies
\begin{eqnarray}
\nonumber f\left(t+\frac{n}{2}\right)\overline{g\left(t-\frac{n}{2}\right)}&=&\{\mathcal O_{A_1,A_2}^{i,j}\}^{-1}\{\mathcal  W_{f,g}^{A_1,A_2}(t,u)\}\\\
\nonumber&=&\int_{\mathbf R^2} K_{A_1}^{-i}(t_1,u_1) \mathcal W_{f,g}^{A_1,A_2}(t,u)\\\
\label{3.3}&&K_{A_2}^{-j}(t_2,u_2)dw.
\end{eqnarray}
Now, we discuss several basic properties of the WVD-QOLCT given by (\ref{3.1}). These properties play important roles in signal representation.\\
\begin{theorem}\label{th 3.2}
{\bf (Boundedness)} Let $f,g\in L^2(\mathbf{R}^2,\mathbf{H}).$ Then
 \begin{equation}\label{3.4}\big|\mathcal W_{f,g}^{A_1,A_2}(t,u)\big|\le\frac{2}{\pi\sqrt{b_1b_2}}\|f\|_{L^{2}\mathbf({R}^2,H)}\|g\|_{L^{2}\mathbf({R}^2,H)}\end{equation}
\end{theorem}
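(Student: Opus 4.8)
The plan is to estimate $\mathcal{W}_{f,g}^{A_1,A_2}(t,u)$ directly from the integral representation (\ref{3.2}), which is available here since $b_1,b_2\neq 0$, by passing the quaternion modulus inside the integral and then invoking the Cauchy--Schwarz inequality. The starting observation is that the quaternion OLCT kernels are chirp-type kernels of constant modulus: $\big|K^i_{A_1}(n_1,u_1)\big|=\frac{1}{\sqrt{2\pi|b_1|}}$ and $\big|K^j_{A_2}(n_2,u_2)\big|=\frac{1}{\sqrt{2\pi|b_2|}}$, independent of all four arguments. This is exactly what will make the final bound uniform in $(t,u)$.

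First I would use the elementary facts that the modulus on $\mathbf{H}$ is multiplicative, $|pq|=|p|\,|q|$ and $|\overline{q}|=|q|$, together with the triangle inequality for the $\mathbf{H}$-valued integral, $\big|\int F\big|\le\int|F|$, to obtain
$$\big|\mathcal{W}_{f,g}^{A_1,A_2}(t,u)\big|\le\frac{1}{2\pi\sqrt{b_1b_2}}\int_{\mathbf{R}^2}\Big|f\big(t+\frac{n}{2}\big)\Big|\,\Big|g\big(t-\frac{n}{2}\big)\Big|\,dn.$$
Then I would apply the Cauchy--Schwarz inequality in $L^2(\mathbf{R}^2)$ to the right-hand integral, splitting it as the product of $\big(\int_{\mathbf{R}^2}|f(t+n/2)|^2\,dn\big)^{1/2}$ and $\big(\int_{\mathbf{R}^2}|g(t-n/2)|^2\,dn\big)^{1/2}$. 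In each of these the two-dimensional substitution $x=t\pm n/2$, with Jacobian $dn=4\,dx$, turns the first factor into $2\|f\|_{L^2(\mathbf{R}^2,\mathbf{H})}$ and the second into $2\|g\|_{L^2(\mathbf{R}^2,\mathbf{H})}$. Multiplying these with the constant $\frac{1}{2\pi\sqrt{b_1b_2}}$ gives $\frac{4}{2\pi\sqrt{b_1b_2}}\|f\|_{L^2(\mathbf{R}^2,\mathbf{H})}\|g\|_{L^2(\mathbf{R}^2,\mathbf{H})}=\frac{2}{\pi\sqrt{b_1b_2}}\|f\|_{L^2(\mathbf{R}^2,\mathbf{H})}\|g\|_{L^2(\mathbf{R}^2,\mathbf{H})}$, which is precisely (\ref{3.4}).

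I do not expect a genuine obstacle here; the argument is routine provided two points are handled with care. The first is the non-commutativity of $\mathbf{H}$: one must be sure the order of the factors $K^i_{A_1}\,f\,\overline{g}\,K^j_{A_2}$ is irrelevant to the estimate, which is legitimate because only moduli enter and $|\cdot|$ is multiplicative on $\mathbf{H}$, and that the modulus may be moved through the vector-valued integral (valid componentwise). The second is bookkeeping of the dilation Jacobian: the factor $2^2=4$ arising from $n\mapsto 2(x-t)$ in $\mathbf{R}^2$ is exactly what upgrades the kernel constant $\frac{1}{2\pi}$ to $\frac{2}{\pi}$, so dropping it would produce the wrong constant. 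It is worth recording explicitly that the resulting bound does not depend on $t$ or $u$, so $\mathcal{W}_{f,g}^{A_1,A_2}$ is a bounded function on $\mathbf{R}^2\times\mathbf{R}^2$.
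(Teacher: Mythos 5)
Your proposal is correct and follows essentially the same route as the paper's proof: modulus inside the integral, the constant modulus $\frac{1}{2\pi\sqrt{|b_1b_2|}}$ of the product of chirp kernels, Cauchy--Schwarz, and the substitutions $w=t\pm n/2$ with Jacobian factor $4$. The only cosmetic difference is that the paper estimates $|\mathcal W_{f,g}^{A_1,A_2}|^2$ and takes a square root at the end, whereas you bound the modulus directly.
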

\begin{proof} By the virtue of  Cauchy-Schwarz inequality in quaternion domain, we have
\begin{eqnarray*}
&&|\mathcal W_{f,g}^{A_1,A_2}(t,u)|^{2}\\\
&&=\left|\int_{\mathbf R^2} K_{A_1}^i(n_1,u_1) \,f\left(t+\frac{n}{2}\right)\overline{g\left(t-\frac{n}{2}\right)}\,K_{A_2}^j(n_2,u_2)dn\right| ^2\\\
&&\leq \left( \int_{\mathbf{R}^2} \left| K_{A_1}^{\mathbf{i}}(n_1,u_1)f\left(t+\frac{n}{2}\right)\overline{g\left(t-\frac{n}{2}\right)}K_{A_2}^{\mathbf{j}}(n_2,u_2)\right| \rm{d}n\right) ^2 \\\
&&=\left( \frac{1}{\sqrt{4\pi^{2}|b_{1}b_{2}|}}\int_{\mathbf{R}^2} \left| f\left(t+\frac{n}{2}\right)\overline{g\left(t-\frac{n}{2}\right)}\right| \rm{d}n\right) ^2 \\\
&&\leq \frac{1}{4\pi^{2}|b_{1}b_{2}|}\left(\int_{\mathbf{R}^2} \left| f\left(t+\frac{n}{2}\right)\right|^{2}\rm{d}n\right)\left(\int_{\mathbf{R}^2} \left| \overline{g\left(t-\frac{n}{2}\right)}\right|^{2} \rm{d}s\right)\\\
&&=\frac{1}{4\pi^{2}|b_{1}b_{2}|}\left(4\int_{\mathbf{R}^2} \left| f(w)\right|^{2}\rm{d}w\right)\left(4\int_{\mathbf{R}^2} \left| \overline{g(y)}\right|^{2} \rm{d}y\right)\\\
&&=\frac{4}{\pi^{2}|b_{1}b_{2}|} \|f\|_{L^{2}\mathbf({R}^2,H)}^{2} \|g\|_{L^{2}\mathbf({R}^2,H)}^{2}
\end{eqnarray*}
where applying the change of variables $w=t+\frac{n}{2}$ and  $y=t-\frac{n}{2}$ in the last second  step. Then we have
\begin{eqnarray*}
|\mathcal W_{f,g}^{A_1,A_2}(t,u)|\leq \frac{2}{\pi\sqrt{|b_{1}b_{2}|}} \|f\|_{L^{2}\mathbf({R}^2,H)} \|g\|_{L^{2}\mathbf({R}^2,H)}
\end{eqnarray*}
which completes the proof of Theorem. \end{proof}
\begin{theorem}\label{th 3.3}{\bf (Nonlinearity)}  Let $f$ and $g$ be two quaternion functions in $L^2(\mathbf R^2,\mathbf H).$ Then
\begin{equation}\label{3.5}\mathcal W_{f+g}^{A_1,A_2}=\mathcal W_{f,f}^{A_1,A_2}+\mathcal W_{f,g}^{A_1,A_2}+\mathcal W_{g,f}^{A_1,A_2}+\mathcal W_{g,g}^{A_1,A_2}\end{equation}
\end{theorem}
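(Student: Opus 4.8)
The plan is to expand the auto-WVD-QOLCT of the superposition $f+g$ directly from Definition~\ref{def 3.1} in the generic case $b_1,b_2\neq 0$, and then to use the additivity of the quaternion correlation product together with the additivity of the integral operator $\mathcal O^{i,j}_{A_1,A_2}$ that appears in~(\ref{3.2}). Since by the Note following Definition~\ref{def 3.1} one has $\mathcal W_{f+g}^{A_1,A_2}:=\mathcal W_{f+g,f+g}^{A_1,A_2}$, I would first write
$$\mathcal W_{f+g}^{A_1,A_2}(t,u)=\int_{\mathbf R^2}K^{i}_{A_1}(n_1,u_1)\,(f+g)\left(t+\frac n2\right)\overline{(f+g)\left(t-\frac n2\right)}\,K^{j}_{A_2}(n_2,u_2)\,dn .$$

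Next I would use that conjugation on $\mathbf H$ is additive, $\overline{(f+g)(t-\frac n2)}=\overline{f(t-\frac n2)}+\overline{g(t-\frac n2)}$, and that multiplication in $\mathbf H$ distributes over addition on both the left and the right, to expand the integrand as the sum of the four products $h_{f,f}(t,n)+h_{f,g}(t,n)+h_{g,f}(t,n)+h_{g,g}(t,n)$, where $h_{\varphi,\psi}(t,n)=\varphi(t+\frac n2)\overline{\psi(t-\frac n2)}$ exactly as in~(\ref{3.2}). I would stress here that no commutativity is invoked: the kernel $K^{i}_{A_1}$ stays on the left and $K^{j}_{A_2}$ on the right throughout, so the expansion is purely a left/right distributivity argument, which is what keeps the computation valid over the noncommutative algebra $\mathbf H$.

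Then, invoking linearity of the integral — equivalently the additivity $\mathcal O^{i,j}_{A_1,A_2}\{h_1+h_2\}=\mathcal O^{i,j}_{A_1,A_2}\{h_1\}+\mathcal O^{i,j}_{A_1,A_2}\{h_2\}$ of the operator in~(\ref{3.2}) — I would split the single integral into four integrals. By~(\ref{3.2}) these equal, respectively, $\mathcal W_{f,f}^{A_1,A_2}(t,u)$, $\mathcal W_{f,g}^{A_1,A_2}(t,u)$, $\mathcal W_{g,f}^{A_1,A_2}(t,u)$ and $\mathcal W_{g,g}^{A_1,A_2}(t,u)$, which is precisely~(\ref{3.5}). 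The degenerate branches $b_1=0$ and/or $b_2=0$ are handled identically, since there too the transform acts by multiplying $f$ on the left (by a scalar chirp and possibly a kernel) and $\overline g$ on the right, so it remains bi-additive in the pair $(f,\overline g)$.

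There is essentially no serious obstacle here; the only point that genuinely needs a sentence of care in the write-up is the noncommutativity remark above, namely that every manipulation must preserve the fixed left/right placement of the two kernels relative to the signals. The real content of the statement is conceptual rather than technical: it records that the WVD-QOLCT is \emph{not} linear but quadratic (bi-additive) in its argument, so that a superposition $f+g$ produces the two genuine auto terms $\mathcal W_{f,f}^{A_1,A_2}$, $\mathcal W_{g,g}^{A_1,A_2}$ together with the two cross terms $\mathcal W_{f,g}^{A_1,A_2}$, $\mathcal W_{g,f}^{A_1,A_2}$ — the familiar cross-term phenomenon of Wigner-type distributions.
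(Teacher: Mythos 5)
Your proposal is correct and follows essentially the same route as the paper's own proof: expand the quaternion correlation product of $f+g$ with itself using left/right distributivity, keeping $K^{i}_{A_1}$ on the left and $K^{j}_{A_2}$ on the right, and then split the integral into the four terms $\mathcal W_{f,f}^{A_1,A_2}+\mathcal W_{f,g}^{A_1,A_2}+\mathcal W_{g,f}^{A_1,A_2}+\mathcal W_{g,g}^{A_1,A_2}$. Your explicit remarks on noncommutativity and on the degenerate branches $b_1=0$ or $b_2=0$ are sensible additions but do not change the argument.
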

\begin{proof} By definition \ref{def 3.1} we have
\begin{eqnarray*}
&&\mathcal W_{f+g}^{A_1,A_2}(t,u)\\\
&&=\int_{{\mathbf R}^2}{K^{i }_{A_1}(n_1,u_1)}\left[f(t+\frac{n}{2})+{g}(t+\frac{n}{2})\right]\\\
&&\overline{\left[f(t-\frac{n}{2})+{g}(t-\frac{n}{2})\right]}K^{j }_{A_2}(n_2,u_2)dn\\\\
&&=\int_{{\mathbf R}^2}{K^{i }_{A_1}(n_1,u_1)}\left[f(t+\frac{n}{2})\overline{f(t-\frac{n}{2})}\right.\\\
&&+f(t+\frac{n}{2})\overline{g(t-\frac{n}{2})}\\\
&& +\left.g(t+\frac{n}{2})\overline{f(t-\frac{n}{2})}+g(t+\frac{n}{2})\overline{g(t-\frac{n}{2})}\right]
K^{j }_{A_2}(n_2,u_2)dn\\\\
&&=\int_{{\mathbf R}^2}{K^{i }_{A_1}(n_1,u_1)}f\left(t+\frac{n}{2}\right)\overline{f\left(t-\frac{n}{2}\right)}K^{j }_{A_2}(n_2,u_2)dn\\\
&&\;+\int_{{\mathbf R}^2}{K^{i }_{A_1}(n_1,u_1)}f\left(t+\frac{n}{2}\right)\overline{g\left(t-\frac{n}{2}\right)}K^{j }_{A_2}(n_2,u_2)dn\\\
&&\;+\int_{{\mathbf R}^2}{K^{i }_{A_1}(n_1,u_1)}g\left(t+\frac{n}{2}\right)\overline{f\left(t-\frac{n}{2}\right)}K^{j }_{A_2}(n_2,u_2)dn\\\
&&\;+\int_{{\mathbf R}^2}{K^{i }_{A_1}(n_1,u_1)}g\left(t+\frac{n}{2}\right)\overline{g\left(t-\frac{n}{2}\right)}K^{j }_{A_2}(n_2,u_2)dn\\\\
&&=\mathcal W_{f,f}^{A_1,A_2}+\mathcal W_{f,g}^{A_1,A_2}+\mathcal W_{g,f}^{A_1,A_2}+\mathcal W_{g,g}^{A_1,A_2}
\end{eqnarray*}
which completes the proof of Theorem. \end{proof}
Note the properties like Shift,Modulation,Dilation are similar to the classical QOLCT so we avoided them.
\begin{theorem}\label{th 3.4}
{\bf (Reconstruction formula).} For  $f,g\in L^{2}(\mathbf{R}^2,\mathbf{H})$ where $g$ does not vanish at $0$ .We
get the following inversion formula of the WVD-QOLCT:
\begin{eqnarray}\nonumber f(t)&=&\frac{1}{\overline{g(0)}}\int_{\mathbf R^2}K^{-i }_{A_1}(u_1,n_1)\mathcal W_{f,g}^{A_1,A_2}\left(\frac{t}{2},u\right)\\\
\label{3.6}&&K^{-j }_{A_2}(u_2,n_2)du\end{eqnarray}
\end{theorem}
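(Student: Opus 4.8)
The plan is to piggy-back on the factorisation already recorded in \eqref{3.2}, namely that the WVD-QOLCT is nothing but the two-sided QOLCT of the quaternion correlation product $h_{f,g}(t,n)=f\!\left(t+\tfrac n2\right)\overline{g\!\left(t-\tfrac n2\right)}$ taken in the lag variable $n$, and then to invert that transform. For fixed $t$ the map $n\mapsto h_{f,g}(t,n)$ lies in $L^1(\mathbf R^2,\mathbf H)$ — this is exactly the Cauchy--Schwarz estimate carried out in the proof of Theorem~\ref{th 3.2}, after the substitution $w=t+\tfrac n2$, $y=t-\tfrac n2$ — so the inverse QOLCT applies and returns precisely the identity \eqref{3.3},
\[
f\!\left(t+\tfrac n2\right)\overline{g\!\left(t-\tfrac n2\right)}=\int_{\mathbf R^2}K^{-i}_{A_1}(n_1,u_1)\,\mathcal W_{f,g}^{A_1,A_2}(t,u)\,K^{-j}_{A_2}(n_2,u_2)\,du .
\]

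The second step is a change of evaluation point: I would specialise \eqref{3.3} at $(t,n)\mapsto\bigl(\tfrac t2,\,t\bigr)$, i.e. take the centre variable to be $\tfrac t2$ and the lag variable to be $t$. Then $t+\tfrac n2$ collapses to $t$ and $t-\tfrac n2$ collapses to $0$, so the left side becomes $f(t)\,\overline{g(0)}$ while the right side becomes $\int_{\mathbf R^2}K^{-i}_{A_1}(\cdot,u_1)\,\mathcal W_{f,g}^{A_1,A_2}\!\left(\tfrac t2,u\right)K^{-j}_{A_2}(\cdot,u_2)\,du$, which after relabelling is the integral appearing in \eqref{3.6}. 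Finally, the hypothesis $g(0)\neq 0$ makes $\overline{g(0)}$ an invertible quaternion, with $\overline{g(0)}^{-1}=g(0)/|g(0)|^2$; cancelling it from the identity isolates $f(t)$ and produces \eqref{3.6}.

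The only place requiring more than bookkeeping is the non-commutativity of $\mathbf H$: the factor $\overline{g(0)}$ sits to the \emph{right} of $f(t)$, so it must be removed by multiplication on the right, and the symbol $1/\overline{g(0)}$ in \eqref{3.6} has to be understood in that ordered sense (it passes through the scalar integral but not, in general, past the quaternion-valued integrand). A minor technical caveat is that \eqref{3.3}, and hence its specialisation, holds in the almost-everywhere sense in the lag variable; reading it at the single value $n=t$ is legitimate once the right-hand side is continuous in the lag, which holds under the standing integrability coming from the boundedness estimate of Theorem~\ref{th 3.2}. Apart from these remarks the argument is a direct substitution, so I do not anticipate a genuine obstacle.
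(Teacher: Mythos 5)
Your proposal is correct and takes essentially the same route as the paper: invert the QOLCT to recover the identity (\ref{3.3}), specialise the centre and lag variables so that $t-\frac{n}{2}=0$ (the paper sets $t=\frac{n}{2}$ and substitutes $w=2t$, which is exactly your choice $(t,n)\mapsto(\frac{t}{2},t)$), and then cancel $\overline{g(0)}$. Your additional remarks on right-cancellation of the quaternion $\overline{g(0)}$ and on the almost-everywhere validity of (\ref{3.3}) are careful refinements that the paper's own proof silently omits.
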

\begin{proof} By (\ref{3.3}), we have
 \begin{eqnarray*}
\{h_{f,g}(t,n)\}&=\{\mathcal O_{A_1,A_2}^{i,j}\}^{-1}\{\mathcal  W_{f,g}^{A_1,A_2}(t,u) \}\\
\end{eqnarray*}
which implies
\begin{eqnarray*}
&&f\left(t+\frac{n}{2}\right)\overline{g\left(t-\frac{n}{2}\right)}\\\
&&=\int_{\mathbf R^2} K_{A_1}^{-i}(t_1,u_1) \mathcal W_{f,g}^{A_1,A_2}(t,u)K_{A_2}^{-j}(t_2,u_2)dw,
\end{eqnarray*}
Now let $t=\frac{n}{2}$ and taking change of variable $w=2t$, we get
$$f(w)=\frac{1}{\overline{g(0)}}\int_{\mathbf R^2}K^{-i }_{A_1}(u_1,n_1)\mathcal W_{f,g}^{A_1,A_2}\left(\frac{w}{2},u\right)K^{-j }_{A_2}(u_2,n_2)du$$
which completes the proof of Theorem. \end{proof}
\begin{theorem}\label{th 3.5}
{\bf (Orthogonality relation).}  If $f_1,f_2,g_1,g_2 \in L^2(\mathbf{R}^2,\mathbf{H})$ are quaternion-valued signals.Then
\begin{equation}\label{3.7}\left\langle \mathcal W^{A_1,A_2}_{f_1,g_1}(t,u), {\mathcal W^{A_1,A_2}_{f_2,g_2}(t,u)} \right\rangle
=\left[\left\langle f_1,f_2\right\rangle\left\langle g_2,g_1\right\rangle\right]_{\mathbf H}\end{equation}
\end{theorem}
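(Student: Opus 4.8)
The plan is to reduce the distribution to a two-sided QOLCT of the quaternion correlation product and then invoke the Parseval/kernel-orthogonality properties of the QOLCT. As throughout, we work in the generic case $b_1,b_2\neq 0$. Recall from (\ref{3.2}) that $\mathcal W^{A_1,A_2}_{f_k,g_k}(t,u)=\mathcal O^{i,j}_{A_1,A_2}\{h_{f_k,g_k}(t,\cdot)\}(u)$ for $k=1,2$, where $h_{f_k,g_k}(t,n)=f_k(t+\frac n2)\overline{g_k(t-\frac n2)}$. Writing out the $L^2$ inner product and inserting the kernel representation with dummy variable $n$ for the first factor and $m$ for the second, I would begin from
\begin{eqnarray*}
&&\left\langle \mathcal W^{A_1,A_2}_{f_1,g_1},\,\mathcal W^{A_1,A_2}_{f_2,g_2}\right\rangle\\
&&=\int_{\mathbf R^2}\!\int_{\mathbf R^2}\!\int_{\mathbf R^2}\!\int_{\mathbf R^2}
K^{i}_{A_1}(n_1,u_1)\,h_{f_1,g_1}(t,n)\,K^{j}_{A_2}(n_2,u_2)\\
&&\qquad\times\,\overline{K^{j}_{A_2}(m_2,u_2)}\;\overline{h_{f_2,g_2}(t,m)}\;\overline{K^{i}_{A_1}(m_1,u_1)}\;dn\,dm\,du\,dt.
\end{eqnarray*}

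Next I would perform the $u=(u_1,u_2)$ integration. The two $A_2$-kernels are already adjacent, so the $u_2$-integral gives $\int_{\mathbf R}K^{j}_{A_2}(n_2,u_2)\overline{K^{j}_{A_2}(m_2,u_2)}\,du_2=\delta(n_2-m_2)$, the orthogonality relation that underlies the inversion formula (\ref{3.3}); this factor is real, hence central. For the $A_1$-kernels one must first move $\overline{K^{i}_{A_1}(m_1,u_1)}$ back beside $K^{i}_{A_1}(n_1,u_1)$: both are valued in the commutative subfield $\mathbf C_i=\{a+bi:a,b\in\mathbf R\}$, and since $j\,\overline z=z\,j$ for $z\in\mathbf C_i$, one decomposes the (fixed-in-$u_1$) quaternion $h_{f_1,g_1}(t,n)\overline{h_{f_2,g_2}(t,m)}$ into its two $\mathbf C_i$-components and applies $\int_{\mathbf R}K^{i}_{A_1}(n_1,u_1)\overline{K^{i}_{A_1}(m_1,u_1)}\,du_1=\delta(n_1-m_1)$ to each. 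Collapsing both deltas (so $m=n$) reduces the quadruple integral to
\[
\left\langle \mathcal W^{A_1,A_2}_{f_1,g_1},\,\mathcal W^{A_1,A_2}_{f_2,g_2}\right\rangle
=\int_{\mathbf R^2}\!\int_{\mathbf R^2} h_{f_1,g_1}(t,n)\,\overline{h_{f_2,g_2}(t,n)}\;dn\,dt,
\]
which is exactly the Parseval identity for the two-sided QOLCT applied in $n$ for each fixed $t$, and may alternatively be cited directly.

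Finally I would substitute the definition of $h_{f_k,g_k}$, use $\overline{f_2(t+\frac n2)\overline{g_2(t-\frac n2)}}=g_2(t-\frac n2)\overline{f_2(t+\frac n2)}$, and carry out the change of variables $w=t+\frac n2$, $y=t-\frac n2$, whose Jacobian has absolute value $1$, so that $dn\,dt=dw\,dy$ and the integrand becomes $f_1(w)\,\overline{g_1(y)}\,g_2(y)\,\overline{f_2(w)}$. Integrating the sandwiched $y$-dependent factor first and then the $w$-factor, and regrouping, yields $\big[\langle f_1,f_2\rangle\langle g_2,g_1\rangle\big]_{\mathbf H}$, which completes the proof. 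The main obstacle throughout is the non-commutativity of $\mathbf H$: the $i$-exponential must remain on the left and the $j$-exponential on the right at every stage, and at the last step one must verify that the $y$- and $w$-integrals recombine into precisely the stated bracketed product rather than a reordered variant — here the identities $j\,\overline z=z\,j$ on $\mathbf C_i$ and $\mathrm{Sc}(pq)=\mathrm{Sc}(qp)$ carry the bookkeeping.
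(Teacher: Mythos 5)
Your overall plan --- expand both copies of the distribution with kernels, integrate out $u$ via kernel orthogonality, collapse the deltas to reach $\int\int h_{f_1,g_1}(t,n)\overline{h_{f_2,g_2}(t,n)}\,dn\,dt$, then change variables $w=t+\frac n2$, $y=t-\frac n2$ --- is the Parseval-style counterpart of the paper's argument, and your final step coincides with the paper's. However, the mechanism you describe for the $u_1$-integration contains a genuine error. Writing the middle quaternion as $q=z_1+z_2j$ with $z_1,z_2\in\mathbf C_i$ and pushing $\overline{K^{i}_{A_1}(m_1,u_1)}$ toward the left using $j\,\overline z=z\,j$ gives
\begin{eqnarray*}
K^{i}_{A_1}(n_1,u_1)\,q\,\overline{K^{i}_{A_1}(m_1,u_1)}
&=&z_1\,K^{i}_{A_1}(n_1,u_1)\overline{K^{i}_{A_1}(m_1,u_1)}\\
&&+\,z_2\,K^{i}_{A_1}(n_1,u_1)K^{i}_{A_1}(m_1,u_1)\,j,
\end{eqnarray*}
that is, on the $z_2j$ component the conjugation is stripped off the second kernel. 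The resulting integral $\int_{\mathbf R}K^{i}_{A_1}(n_1,u_1)K^{i}_{A_1}(m_1,u_1)\,du_1$ has phase containing $\frac{d_1}{b_1}u_1^{2}$ and a linear term in $n_1+m_1$; it is a Fresnel--type chirp integral, not $\delta(n_1-m_1)$. So your claim that one "applies $\int K\overline K\,du_1=\delta$ to each component" fails for one of the two components, and the quadruple integral does not collapse as stated.

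The repair is precisely the cyclic identity $\mathrm{Sc}(pq)=\mathrm{Sc}(qp)$ that you mention only at the very end for bookkeeping: it must be invoked here, at the $u_1$-integration, to move $\overline{K^{i}_{A_1}(m_1,u_1)}$ from the far right to the far left, where it meets $K^{i}_{A_1}(n_1,u_1)$ inside the commutative field $\mathbf C_i$ and the genuine orthogonality integral $\int K\overline K\,du_1=\delta(n_1-m_1)$ becomes available. (This presupposes reading $\langle\cdot,\cdot\rangle$ and $[\cdot]_{\mathbf H}$ as the scalar part, which is also what legitimizes the paper's own rearrangement.) The paper sidesteps the delta-function route entirely: it expands only the second factor $\overline{\mathcal W^{A_1,A_2}_{f_2,g_2}}$, uses cyclicity once to wrap the two conjugated kernels around $\mathcal W^{A_1,A_2}_{f_1,g_1}$, and recognizes the $u$-integral as the inverse QOLCT (\ref{3.3}). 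Either route can be made rigorous; yours, as currently written, cannot.
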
		
\begin{proof} By the definition of Winger-ville distribution associated with quaternion OLCT and innear product relation we have
\begin{eqnarray}
\nonumber &&\langle W^{A_1,A_2}_{f_1,g_1}(t,u), {\mathcal W^{A_1,A_2}_{f_2,g_2}(t,u)} \rangle\\\
\nonumber&&=\int_{\mathbf{R}^4}\left[ \mathcal W^{A_1,A_2}_{f_1,g_1}(t,u) \overline{{\mathcal W^{A_1,A_2}_{f_2,g_2}(t,u)}}\right]_{\mathbf H}dudt\\\
\nonumber&&=\int_{\mathbf{R}^4}\left[ \mathcal W^{A_1,A_2}_{f_1,g_1}(t,u) \right.\\\
\nonumber&&\left.\overline{\int_{\mathbf{R}^2}K_{A_1}^i(n_1,u_1)f_2\left(t+\frac{n}{2}\right) \overline{g_2\left(t-\frac{n}{2}\right)}} \right.\\\
\nonumber&&\left.\overline{ K_{A_2}^j(n_2,u_2)dn}\right]_{\mathbf H}dudt\\\
\nonumber&&=\int_{\mathbf{R}^6}\left[ \mathcal W^{A_1,A_2}_{f_1,g_1}(t,u) K_{A_2}^{-j}(n_2,u_2)g_2\left(t-\frac{n}{2}\right)\right.\\\
\nonumber&&\left. \overline{f_2\left(t+\frac{n}{2}\right)}K_{A_1}^{-i}(n_1,u_1) \right]_{\mathbf H}dudtdn\\\
\nonumber&&=\int_{\mathbf{R}^6}\left[ K_{A_1}^{-i}(n_1,u_1) \mathcal W^{A_1,A_2}_{f_1g_1}(t,u) K_{A_2}^{-j}(n_2,u_2)\right.\\\
\nonumber&&\left. g_2\left(t-\frac{n}{2}\right)\overline{f_2\left(t+\frac{n}{2}\right)} \right]_{\mathbf H}dudtdn\\\
\nonumber&&=\int_{\mathbf{R}^4}\left[ \int_{\mathbf{R}^2}K_{A_1}^{-i}(n_1,u_1) \mathcal W^{A_1,A_2}_{f_1,g_1}(t,u) K_{A_2}^{-j}(n_2,u_2)du\right.\\\
\label{3.8}&&\left. g_2\left(t-\frac{n}{2}\right) \overline{f_2\left(t+\frac{n}{2}\right)} \right]_{\mathbf H}dtdn
\end{eqnarray}
Because
\begin{eqnarray*} \overline{K_{A_1}^{i}(n_1,u_1)}=K_{A_{1}}^{-i}(u_1,n_1)=K_{A_{1}^{-1}}^{i}(u_1,n_1)\end{eqnarray*}
\begin{eqnarray*} \overline{K_{A_2}^{j}(n_2,u_2)}=K_{A_{2}}^{-j}(n_2,u_2)=K_{A_{2}^{-1}}^{j}(u_2,n_2)\end{eqnarray*}	
Now by using (\ref{3.3}) in (\ref{3.8}), we have
\begin{eqnarray*}
&& \left\langle\mathcal W^{A_1,A_2}_{f_1,g_1}(t,u), {\mathcal W^{A_1,A_2}_{f_2,g_2}(t,u)} \right\rangle\\\
&&=\int_{\mathbf{R}^4}\left[\int_{\mathbf{R}^2}K_{A_1^{-1}}^{i}(u_1,n_1)\mathcal W^{A_1,A_2}_{f_1,g_1}(t,u) K_{A_2^{-1}}^{j}(u_2,n_2)du\right.\\\
&&\left. g_2\left(t-\frac{n}{2}\right)\overline{f_2\left(t+\frac{n}{2}\right)}\right]_{\mathbf H}dtdn\\\
&&=\int_{\mathbf{R}^4}\left[ f_1\left(t+\frac{n}{2}\right)\overline{g_1\left(t-\frac{n}{2}\right)} g_2\left(t-\frac{n}{2}\right)\overline{f_2\left(t+\frac{n}{2}\right)}  \right]_{\mathbf H}dtdn
\end{eqnarray*}
Using the change of variables $t+\frac{n}{2}=\omega$,and $t-\frac{n}{2}=\xi$  the equation becomes
\begin{eqnarray*}
&&\langle\mathcal W^{A_1,A_2}_{f_1,g_1}(t,u), \mathcal W^{A_1,A_2}_{f_2,g_2}(t,u) \rangle\\\
&&=\int_{\mathbf{R}^4}\left[ f_1(\omega)\overline{g_1(\xi)} g_2(\xi)\overline{f_2(\omega)}\right]_{\mathbf H}d\omega d\xi\\\
&&=\left[\int_{\mathbf{R}^2}f_1(\omega) \overline{f_2(\omega)} d\omega \int_{\mathbf{R}^2}g_2(\xi) \overline{g_1(\xi)} d\xi \right]_{\mathbf H}\\\
&&=[\langle f_1,f_2\rangle\langle g_2,g_1\rangle]_{\mathbf H}
\end{eqnarray*}
which completes the proof theorem. \end{proof}
{\bf Consequences of Theorem \ref{th 3.5}.}
\begin{enumerate}
\item  If $g_1=g_2=g$, then
\begin{equation}\label{3.9}\langle \mathcal W^{A_1,A_2}_{f_1,g}(t,u), {\mathcal W^{A_1,A_2}_{f_2,g}(w,u)} \rangle
=\|g\|^{2}_{L^2(\mathbf{R}^2)}\langle f_1,f_2\rangle\end{equation}
\item  If $f_1=f_2=f$, then
\begin{equation}\label{3.10}\langle \mathcal W^{A_1,A_2}_{f,g_1}(t,u), {\mathcal W^{A_1,A_2}_{f,g_2}(w,u)} \rangle
=\|f\|^{2}_{L^2(\mathbf{R}^2)}\langle g_1,g_2\rangle.\end{equation}
\item If  $f_1=f_2=f$ and $g_1=g_2=g$, then
\begin{eqnarray}\nonumber &&\langle \mathcal W^{A_1,A_2}_{f,g}(t,u), {\mathcal W^{A_1,A_2}_{f,g}(w,u)} \rangle\\\
\nonumber && =\int_{\mathbf{R}^2}\int_{\mathbf{R}^2}| \mathcal W^{A_1,A_2}_{f,g}(t,u)|^{2}dudt\\\
\label{3.11}&&=\|f\|^{2}_{L^2(\mathbf{R}^2)} \|g\|^{2}_{L^2(\mathbf{R}^2)}\end{eqnarray}
\end{enumerate}
\begin{theorem}\label{th 3.6}
{\bf (Plancherel’s theorem for\\ WVD-QOLCT)}.For $f, g\in L^{2}(\mathbf{R}^2,\mathbf{H})$, we have the equality
\begin{eqnarray}
\nonumber &&\int_{\mathbf{R}^2}\int_{\mathbf{R}^2}|\mathcal W^{A_1,A_2}_{f,g}(t,u)|^2 dudt\\\
\nonumber &&=\|\mathcal W^{A_1,A_2}_{f,g}\|^2 _{L^{2}(\mathbf{R}^2,\mathbf{H})}\\\
\label{3.12}&&=\|f\|^2_{L^{2}(\mathbf{R}^2,\mathbf{H})}\|g|^2_{L^{2}(\mathbf{R}^2,\mathbf{H})}\end{eqnarray}
\end{theorem}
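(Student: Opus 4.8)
The plan is to obtain (\ref{3.12}) as a direct specialization of the orthogonality relation proved in Theorem~\ref{th 3.5}; indeed it is precisely consequence (\ref{3.11}) recorded once more with the intermediate norm notation made explicit. First I would dispose of the equality of the first two members, which is purely definitional: for the quaternion-valued function $(t,u)\mapsto\mathcal W^{A_1,A_2}_{f,g}(t,u)$ on $\mathbf R^4$, the $L^2(\mathbf R^2,\mathbf H)$ norm used throughout is, with the same inner product as in the proof of Theorem~\ref{th 3.5},
$$\big\|\mathcal W^{A_1,A_2}_{f,g}\big\|^2_{L^{2}(\mathbf{R}^2,\mathbf{H})}=\int_{\mathbf R^4}\Big[\mathcal W^{A_1,A_2}_{f,g}(t,u)\,\overline{\mathcal W^{A_1,A_2}_{f,g}(t,u)}\Big]_{\mathbf H}\,du\,dt=\int_{\mathbf R^2}\!\int_{\mathbf R^2}\big|\mathcal W^{A_1,A_2}_{f,g}(t,u)\big|^2\,du\,dt,$$
because $q\bar q=|q|^2\in\mathbf R$ for every $q\in\mathbf H$, so the scalar-part bracket $[\cdot]_{\mathbf H}$ acts as the identity on $q\bar q$.

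Next I would apply (\ref{3.7}) with $f_1=f_2=f$ and $g_1=g_2=g$, which immediately gives
$$\big\langle \mathcal W^{A_1,A_2}_{f,g},\mathcal W^{A_1,A_2}_{f,g}\big\rangle=\big[\langle f,f\rangle\,\langle g,g\rangle\big]_{\mathbf H}.$$
Since $\langle f,f\rangle=\|f\|^2_{L^2(\mathbf R^2,\mathbf H)}$ and $\langle g,g\rangle=\|g\|^2_{L^2(\mathbf R^2,\mathbf H)}$ are nonnegative real numbers, their product is real, the bracket is again the identity, and the left-hand side equals $\int\!\int|\mathcal W^{A_1,A_2}_{f,g}|^2\,du\,dt$ by the display above. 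Chaining the two computations yields all three equalities in (\ref{3.12}).

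I expect no genuine obstacle here: every analytic step — the substitutions $\omega=t+\frac{n}{2}$, $\xi=t-\frac{n}{2}$, the use of the inversion identity (\ref{3.3}), and the Fubini-type interchange of the $du$, $dt$, $dn$ integrations — has already been carried out inside the proof of Theorem~\ref{th 3.5}, so this theorem is just a one-line corollary. The only two points worth a remark are that the boundedness estimate of Theorem~\ref{th 3.2} guarantees $\mathcal W^{A_1,A_2}_{f,g}$ is pointwise well defined while the finiteness of the right-hand side just computed is what places it in $L^2(\mathbf R^4,\mathbf H)$; and that, the terminal scalars being real, the non-commutativity of $\mathbf H$ never enters, so writing $\|f\|^2_{L^{2}(\mathbf{R}^2,\mathbf{H})}\|g\|^2_{L^{2}(\mathbf{R}^2,\mathbf{H})}$ with the factors in either order is unambiguous.
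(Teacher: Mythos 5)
Your proposal is correct and follows the paper's own route: the paper proves Theorem~\ref{th 3.6} by simply invoking consequence (\ref{3.11}) of the orthogonality relation, which is exactly your specialization $f_1=f_2=f$, $g_1=g_2=g$ of (\ref{3.7}). You merely spell out the definitional identification of the $L^2$ norm and the reality of the terminal scalars, which the paper leaves implicit.
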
		
\begin{proof} If we look at  (\ref{3.11}), the proof of the theorem follows.\end{proof}

Now we move forward towards our main section that is convolution and correlation theorems for winger-ville distribution associated with quaternion offset linear canonical transform.
\section{ Convolution and Correlation theorem for WVD-QOLCT}\label{sec 4}
The convolution and correlation are fundamental signal processing algorithms in the theory of linear time-invariant(LTI) systems. In engineering, they have been widely used for various template matchings. In the following we first define the convolution and correlation for the QOLCT. They are extensions of the convolution definition from the OLCT (see \cite{16}) to the QOLCT domain. We then establish the new convolution and correlation for the WVD-QOLCT.We also show that the convolution theorems of the QWVD and WVD-QLCT can be looked as a special case of our achieved results.
\begin{definition}\label{def 4.1} For any two quaternion functions $f,g \in L^2(\mathbf R^2,\mathbf H),$ we define the convolution operator of the QOLCT
as
\begin{equation}\label{4.1}(f\star g)(t)=\int_{\mathbf R^2}\Psi(z_1,t_1)f(z)g(t-z)\Psi(z_2,t_2) dz\end{equation}
Where $\Psi(z_1,t_1)$ and $\Psi(z_2,t_2)$ are known as weight functions.\\
 We assume
\begin{eqnarray*}\Psi(z_1,t_1)=e^{-i\frac{a_1}{b_1}2z_1(t_1-z_1)}\end{eqnarray*}
and
\begin{equation}\label{4.2}\Psi(z_2,t_2)=e^{-j\frac{a_2}{b_2}2z_2(t_2-z_2)}\end{equation}
\end{definition}
As a consequence of the above definition, we get the following important theorem.
\begin{theorem}\label{th 4.2}{(WVD-QOLCT Convolution).} For any two quaternion functions $f,g \in L^2{(\mathbf R^2,\mathbf H)},$ the following result holds
\begin{eqnarray}
\nonumber &&\mathcal W^{A_1,A_2}_{f\star g}(t,u)\\\
\nonumber&&=\sqrt{2\pi b_1 i}e^{\frac{-i}{2b_1}[d_1(u_1^2+r_1^2)-2u_1(d_1r_1-b_1s_1)]}\\\
\nonumber&&\times\left\{\int_{\mathbf R^2}e^{-i\frac{a_1}{b_1}(4w_1(t_1-w_1))}\mathcal W^{A_1,A_2}_{f,f}(w,u)\mathcal W^{A_1,A_2}_{g,g}(t-w,u)\right.\\\
\nonumber&&\left.e^{-j\frac{a_2}{b_2}(4w_2(t_2-w_2))}dw\right\}\\\
\label{4.3}&&\times\sqrt{2\pi b_2 j}e^{\frac{-j}{2b_2}[d_2(u_2^2+r_2^2)-2u_2(d_2r_2-b_2s_2)]}
\end{eqnarray}
\end{theorem}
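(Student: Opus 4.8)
The plan is to reduce the whole computation to Definition~\ref{def 3.1} by means of the operator identity (\ref{3.2}). First I would write $\mathcal W^{A_1,A_2}_{f\star g}(t,u)=\mathcal O^{i,j}_{A_1,A_2}\{h(t,n)\}$, where $h(t,n)=(f\star g)(t+\frac n2)\,\overline{(f\star g)(t-\frac n2)}$, and then substitute Definition~\ref{def 4.1} into each of the two factors $(f\star g)(t+\frac n2)$ and $\overline{(f\star g)(t-\frac n2)}$. This introduces two auxiliary $\mathbf R^2$-integration variables, say $z$ and $z'$, together with the weight functions $\Psi(z_1,t_1+\frac{n_1}{2})$, $\Psi(z_2,t_2+\frac{n_2}{2})$, $\overline{\Psi(z'_1,t_1-\frac{n_1}{2})}$, $\overline{\Psi(z'_2,t_2-\frac{n_2}{2})}$. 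After inserting the explicit chirp form of the quaternion kernels $K^i_{A_1}(n_1,u_1)$ and $K^j_{A_2}(n_2,u_2)$, I would invoke Fubini's theorem (valid first for $f,g\in L^1\cap L^2$ and then extended by density) to interchange the $n$-, $z$- and $z'$-integrations.

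The core of the argument is a single constant-Jacobian change of variables which turns the three $\mathbf R^2$-integrations into an outer integration over a new variable $w$ plus two inner integrations that will be recognised as WVD-QOLCT's. Concretely one expresses $z$, $z'$ and $n$ through $w$ and the two ``internal'' difference variables belonging to the target distributions $\mathcal W^{A_1,A_2}_{f,f}(w,u)$ and $\mathcal W^{A_1,A_2}_{g,g}(t-w,u)$. The weight functions in (\ref{4.1})--(\ref{4.2}) are designed precisely so that, after this substitution, the total phase produced by $\Psi(z_1,t_1+\frac{n_1}{2})$, $\overline{\Psi(z'_1,t_1-\frac{n_1}{2})}$ and the $n_1$-dependent part of $K^i_{A_1}$ factorises into three pieces: a factor depending on $u_1$ alone, which together with the normalising constant of the kernel assembles into $\sqrt{2\pi b_1 i}\,e^{\frac{-i}{2b_1}[d_1(u_1^2+r_1^2)-2u_1(d_1r_1-b_1s_1)]}$; the factor $e^{-i\frac{a_1}{b_1}(4w_1(t_1-w_1))}$; and two genuine kernels $K^i_{A_1}$, one carrying the internal variable of the $f,f$ distribution and one that of the $g,g$ distribution. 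The $\mathbf C_j$-valued factors behave in exactly the same way, producing the second prefactor $\sqrt{2\pi b_2 j}\,e^{\frac{-j}{2b_2}[d_2(u_2^2+r_2^2)-2u_2(d_2r_2-b_2s_2)]}$ and the factor $e^{-j\frac{a_2}{b_2}(4w_2(t_2-w_2))}$. Reading the two surviving inner integrals off Definition~\ref{def 3.1} then gives $\mathcal W^{A_1,A_2}_{f,f}(w,u)$ and $\mathcal W^{A_1,A_2}_{g,g}(t-w,u)$, which is exactly (\ref{4.3}).

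The step I expect to be the main obstacle is the simultaneous bookkeeping of the exponential arguments and of quaternion noncommutativity: one has to insert trivial products $e^{i\theta}e^{-i\theta}=1$ (and their $\mathbf C_j$-counterparts) so that every chirp is matched with the right partner, while keeping all $\mathbf C_i$-valued factors on the left, the quaternion-valued factors $f,\overline{f},g,\overline{g}$ in the middle in the order dictated by (\ref{4.1}), and all $\mathbf C_j$-valued factors on the right, since none of these three classes commute with one another. Once the change of variables is chosen so that the phases split in this way, the recognition of the two WVD-QOLCT's is immediate.

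Finally, specialising $r_i=s_i=0$ collapses the offset parameters and recovers the convolution theorem for the WVD-QLCT, while the further reduction of $(a_i,b_i,c_i,d_i)$ to the quaternion Fourier matrix recovers the QWVD convolution theorem; thus both are contained in (\ref{4.3}).
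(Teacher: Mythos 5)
Your overall route is the one the paper takes: expand $\mathcal W^{A_1,A_2}_{f\star g}$ by Definition~\ref{def 3.1}, substitute Definition~\ref{def 4.1} into each of the two factors, insert the explicit chirp form of the kernels, perform the change of variables $z=w+\frac p2$, $\gamma=w-\frac p2$, $n=p+q$, and read off the two inner integrals as $\mathcal W^{A_1,A_2}_{f,f}(w,u)$ and $\mathcal W^{A_1,A_2}_{g,g}(t-w,u)$ after moving the leftover $u$-dependent chirps to the other side. However, the step you yourself flag as ``the main obstacle'' is exactly where your version diverges from the paper's, and in your version it genuinely fails. You propose to take the honest quaternion conjugate of the second factor, which produces the \emph{conjugated} weights $\overline{\Psi(z_1',t_1-\frac{n_1}{2})}$, $\overline{\Psi(z_2',t_2-\frac{n_2}{2})}$ and, since quaternion conjugation is an anti-involution, also reverses the order of all four factors, putting the $\mathbf C_j$-valued chirp leftmost and the $\mathbf C_i$-valued chirp rightmost. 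With the conjugated (sign-flipped) second weight the phase algebra no longer closes: writing $X=t_1-w_1$, $Y=q_1/2$, the two weights contribute
\begin{eqnarray*}
-2\frac{a_1}{b_1}\Bigl[(w_1+\tfrac{p_1}{2})(X+Y)-(w_1-\tfrac{p_1}{2})(X-Y)\Bigr]
=-\frac{a_1}{b_1}\bigl[2p_1(t_1-w_1)+2q_1w_1\bigr],
\end{eqnarray*}
instead of the $-\frac{a_1}{b_1}[4w_1(t_1-w_1)+p_1q_1]$ that the proof needs. The cross terms $p_1(t_1-w_1)$ and $q_1w_1$ couple the outer variable $w$ to the internal variables of the two would-be distributions, so they can neither be pulled out of the inner integrals as the factor $e^{-i\frac{a_1}{b_1}4w_1(t_1-w_1)}$ nor absorbed into the kernels $K^i_{A_1}(p_1,u_1)$, $K^i_{A_1}(q_1,u_1)$; the term $p_1q_1$ needed to cancel the cross term of $a_1(p_1+q_1)^2$ coming from $K^i_{A_1}(n_1,u_1)$ is also absent. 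Hence the factorization into $\mathcal W^{A_1,A_2}_{f,f}(w,u)\,\mathcal W^{A_1,A_2}_{g,g}(t-w,u)$ does not come out.

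The paper avoids this by replacing $\overline{(f\star g)(t-\frac n2)}$ with $(\overline f\star\overline g)(t-\frac n2)=\int_{\mathbf R^2}\Psi(\gamma_1,t_1-\frac{n_1}{2})\,\overline{f(\gamma)g(t-\frac n2-\gamma)}\,\Psi(\gamma_2,t_2-\frac{n_2}{2})\,d\gamma$, i.e., the weights are kept unconjugated and in place and only the middle quaternion product is conjugated; with both weights carrying the same sign the sum of their phases is $-\frac{a_1}{b_1}[4w_1(t_1-w_1)+p_1q_1]$ and everything factors as claimed. This is a definitional convention (it is not the literal specialization of Definition~\ref{def 3.1} to the pair $(f\star g,f\star g)$), and your proof needs to adopt it explicitly; as written, your conjugation of the weights and the attendant order reversal would derail the computation at precisely the point you deferred.
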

\begin{proof} Applying the definition of the WVD-QOLCT we have
\begin{eqnarray}
\nonumber \mathcal W^{A_1,A_2}_{f\star g}(t,u)&=&\int_{{\mathbf R}^2}{K^{i }_{A_1}(n_1,u_1)}\left[(f\star g)(t+\frac{n}{2})\right]\\\
\label{4.4}&&\left[\overline{f}\star\overline{g}(t-\frac{n}{2})\right]K^{j }_{A_2}(n_2,u_2)dn\end{eqnarray}
Now using Definition \ref{def 4.1} in (\ref{4.4}) we have
\begin{eqnarray}
\nonumber &&\mathcal W^{A_1,A_2}_{f\star g}(t,u)\\\
\nonumber &&=\int_{{\mathbf R}^2}{K^{i }_{A_1}(n_1,u_1)}\left\{\int_{{\mathbf R}^2}\Psi_1(z_1,t_1+\frac{n_1}{2})f(z)\right.\\\
\nonumber &&\qquad g(t+\frac{n}{2}-z) \Psi_2(z_2,t_2+\frac{n_2}{2})dz\\\
\nonumber &&\qquad\left.\times\int_{{\mathbf R}^2}\Psi_1(\gamma_1,t_1-\frac{n_1}{2})\overline {f(\gamma)g(t-\frac{n}{2}-\gamma)}\right. \\\
\nonumber &&\qquad \left. \Psi_2(\gamma_2,t_2-\frac{n_2}{2})d\gamma \right\}{K^{j }_{A_2}(n_2,u_2)}dn\\\
\nonumber &&=\int_{{\mathbf R}^2}{K^{i }_{A_1}(n_1,u_1)}\left\{\int_{{\mathbf R}^2}e^{-i\frac{a_1}{b_1}2z_1((t_1+\frac{n_1}{2})-z_1)}f(z)\right.\\\
\nonumber &&\qquad  g(t+\frac{n}{2}-z) e^{-j\frac{a_2}{b_2}2z_2((t_2+\frac{n_2}{2})-z_2)}dz\\\
\nonumber&&\qquad \times\int_{{\mathbf R}^2}e^{-i\frac{a_1}{b_1}2\gamma_1((t_1-\frac{n_1}{2})-\gamma_1)}\overline{f(\gamma)g(t-\frac{n}{2}-\gamma)}\\\
\label{4.5} &&\qquad \left. e^{-j\frac{a_2}{b_2}2\gamma_2((t_2-\frac{n_2}{2})-\gamma_2)}d\gamma \right\}{K^{j }_{A_2}(n_2,u_2)}dn
\end{eqnarray}
 For simplicity let us denote
\begin{eqnarray}
\nonumber K^{i }_{A_1}(t_1,u_1)&=&K^i_{A_1}e^{\frac{i}{2b_1}[a_1t_1^2+2t_1(r_1-u_1)-2u_1(d_1r_1-b_1s_1)+d_1u_1^2]},\\\
\label{4.6}&&K^i_{A_1}=\frac{1}{\sqrt{2\pi b_1 i}}e^{i\frac{d_1}{2b_1}r_1^2}\end{eqnarray}
and
\begin{eqnarray}
\nonumber K^{j}_{A_2}(t_2,u_2)&=&K^j_{A_2}e^{\frac{j}{2b_2}[a_2t_2^2+2t_2(r_2-u_2)-2u_2(d_2r_2-b_2s_2)+d_2u_2^2]},\\\
\label{4.7}&& K^j_{A_2}=\frac{1}{\sqrt{2\pi b_2 j}}e^{j\frac{d_2}{2b_2}r_2^2}\end{eqnarray}
Now using (\ref{4.6})and (\ref{4.7}) in (\ref{4.5}),we have
\begin{eqnarray*}
&&\mathcal W^{A_1,A_2}_{f\star g}(t,u)\\\
&&=\int_{{\mathbf R^6}}K^i_{A_1}e^{\frac{i}{2b_1}[a_1n_1^2+2n_1(r_1-u_1)-2u_1(d_1r_1-b_1s_1)+d_1u_1^2]}\\\
&& \;e^{-i\frac{a_1}{b_1}2z_1((t_1+\frac{n_1}{2})-z_1)}\\\
&&\;\times f(z)g(t+\frac{n}{2}-z)
 e^{-j\frac{a_2}{b_2}2z_2((t_2+\frac{n_2}{2})-z_2)}\\\
 && \;e^{-i\frac{a_1}{b_1}2\gamma_1((t_1-\frac{n_1}{2})-\gamma_1)}\\\
&& \; \times \overline{f(\gamma)g(t-\frac{n}{2}-\gamma)}e^{-j\frac{a_2}{b_2}2\gamma_2((t_2-\frac{n_2}{2})-\gamma_2)}\\
&& \; \times K^j_{A_2}e^{\frac{j}{2b_2}[a_2n_2^2+2n_2(r_2-u_2)-2u_2(d_2r_2-b_2s_2)+d_2u_2^2]}dzd\gamma dn
\end{eqnarray*}
Setting $z_i=w_i+\frac{p_i}{2},\gamma_i=w_i-\frac{p_i}{2},i=1,2$ we get
\begin{eqnarray*}
&&\mathcal W^{A_1,A_2}_{f\star g}(t,u)\\\
&&=\int_{{\mathbf R^6}}K^i_{A_1}e^{\frac{i}{2b_1}[a_1n_1^2+2n_1(r_1-u_1)-2u_1(d_1r_1-b_1s_1)+d_1u_1^2]}\\\
&&\qquad e^{-i\frac{a_1}{b_1}2\left(w_1+\frac{p_1}{2}\right)\left((t_1+\frac{n_1}{2})-(w_1+\frac{p_1}{2})\right)}\\\
&&\qquad\times f\left(w+\frac{p}{2}\right)g\left(t+\frac{n}{2}-(w+\frac{p}{2})\right)\\\
&&\qquad e^{-j\frac{a_2}{b_2}2\left(w_2+\frac{p_2}{2}\right)\left((t_2+\frac{n_2}{2})-(w_2+\frac{p_2}{2})\right)}\\\
&&\qquad\times e^{-i\frac{a_1}{b_1}2\left(w_1-\frac{p_1}{2}\right)\left((t_1-\frac{n_1}{2})-(w_1-\frac{p_1}{2})\right)}\\\
&& \qquad \overline{f\left(w-\frac{p}{2}\right)g\left(t-\frac{n}{2}-(w-\frac{p}{2})\right)}\\\
&&\qquad\times e^{-j\frac{a_2}{b_2}2\left(w_2-\frac{p_2}{2}\right)\left((t_2-\frac{n_2}{2})-(w_2-\frac{p_2}{2})\right)}\\\
&&\; K^j_{A_2}e^{\frac{j}{2b_2}[a_2n_2^2+2n_2(r_2-u_2)-2u_2(d_2r_2-b_2s_2)+d_2u_2^2]}dpdq dw
\end{eqnarray*}
and $n_i=p_i+q_i,i=1,2$ we obtain
\begin{eqnarray}
\nonumber&&\mathcal W^{A_1,A_2}_{f\star g}(t,u)\\\
\nonumber&&=\int_{{\mathbf R^6}}\\\
\nonumber &&\; K^i_{A_1}e^{\frac{i}{2b_1}[a_1(p_1+q_1)^2+2(p_1+q_1)(r_1-u_1)-2u_1(d_1r_1-b_1s_1)+d_1u_1^2]}\\\
\nonumber &&\; e^{-i\frac{a_1}{b_1}\left(4w_1(t_1-w_1)\right)}e^{-i\frac{a_1}{b_1}p_1q_1}\\\
\nonumber &&\;\times f\left(w+\frac{p}{2}\right)\overline{f\left(w-\frac{p}{2}\right)}g\left(t-w+\frac{q}{2}\right)\overline{g\left(t-w-\frac{q}{2}\right)}\\\
\nonumber &&\; e^{-j\frac{a_2}{b_2}\left(4w_2(t_2-w_2)\right)}e^{-j\frac{a_2}{b_2}p_2q_2}\\\
\nonumber && \; \times K^j_{A_2}e^{\frac{j}{2b_2}[a_2n_2^2+2n_2(r_2-u_2)-2u_2(d_2r_2-b_2s_2)+d_2u_2^2]}dpdqdw\\\
\nonumber &&=\int_{{\mathbf R^2}}\left\{\left[\int_{{\mathbf R^2}}\right.\right.\\\
\nonumber &&\;  K^i_{A_1}e^{\frac{i}{2b_1}[a_1p_1^2+2p_1(r_1-u_1)-2u_1(d_1r_1-b_1s_1)+d_1u_1^2]}\\\
\nonumber && \; f\left(w+\frac{p}{2}\right)\overline{f\left(w-\frac{p}{2}\right)}\\\
\nonumber &&\; \times \left.K^j_{A_2}e^{\frac{j}{2b_2}[a_2p_2^2+2p_2(r_2-u_2)-2u_2(d_2r_2-b_2s_2)+d_2u_2^2]}dp\right]\\\
\nonumber&& \;\times \int_{{\mathbf R^2}}e^{\frac{i}{2b_1}[a_1q_1^2-2q_1(r_1-u_1)]}g\left(t-w+\frac{q}{2}\right)\\\
\nonumber &&\; \times \left. \overline{g\left(t-w-\frac{q}{2}\right)}e^{\frac{j}{2b_2}[a_2q_2^2-2q_2(r_2-u_2)]}dq\right\}\\\
\label{4.8}&&\;\times e^{-i\frac{a_1}{b_1}\left(4w_1(t_1-w_1)\right)}e^{-j\frac{a_2}{b_2}\left(4w_2(t_2-w_2)\right)}dw
\end{eqnarray}
Now multiply  (\ref{4.8})both sides by $K^i_{A_1}e^{\frac{i}{2b_1}[d_1u_1^2-2u_1(d_1r_1-b_1s_1)]}$ and $K^j_{A_2}e^{\frac{j}{2b_2}[d_2u_2^2-2u_2(d_2r_2-b_2s_2)]}$,we get
\begin{eqnarray}
\nonumber &&K^i_{A_1}e^{\frac{i}{2b_1}[d_1u_1^2-2u_1(d_1r_1-b_1s_1)]}\\\
\nonumber &&\; K^j_{A_2}e^{\frac{j}{2b_2}[d_2u_2^2-2u_2(d_2r_2-b_2s_2)]}\mathcal W^{A_1,A_2}_{f\star g}(t,u)\\\
\nonumber &&=\int_{{\mathbf R^2}}e^{-i\frac{a_1}{b_1}\left(4w_1(t_1-w_1)\right)}\mathcal W^{A_1,A_2}_{f,f}(w,u)\\\
\label{4.9}&&\mathcal W^{A_1,A_2}_{g,g}(t-w,u)e^{-j\frac{a_2}{b_2}\left(4w_2(t_2-w_2)\right)}dw
\end{eqnarray}
Now using (\ref{4.6}) and (\ref{4.7}) in (\ref{4.9}) we get,
\begin{eqnarray*}
&&\mathcal W^{A_1,A_2}_{f\star g}(t,u)\\\
&&=\sqrt{2\pi b_1 i}e^{\frac{-i}{2b_1}[d_1(u_1^2+r_1^2)-2u_1(d_1r_1-b_1s_1)]}\\\
&& \; \times\left\{\int_{\mathbf R^2}e^{-i\frac{a_1}{b_1}(4w_1(t_1-w_1))}\mathcal W^{A_1,A_2}_{f,f}(w,u)\mathcal W^{A_1,A_2}_{g,g}(t-w,u)\right.\\\
&& \;\left. e^{-j\frac{a_2}{b_2}(4w_2(t_2-w_2))}dw\right\}\\\
&& \;\times\sqrt{2\pi b_2 j}e^{\frac{-j}{2b_2}[d_2(u_2^2-r_2^2)-2u_2(d_2r_2-b_2s_2)]}
\end{eqnarray*}
which completes the proof of theorem. \end{proof} 
{\bf Consequences of theorem \ref{th 4.2}.}
\begin{enumerate}
\item  Changing parameter $A_i =\left[\begin{array}{cccc}a_i & b_i &| & r_i\\c_i & d_i &| & s_i \\\end{array}\right] $ ,$i=1,2$  to $A_i =\left[\begin{array}{cccc}a_i & b_i &| & 0\\c_i & d_i &| & 0 \\\end{array}\right] $ ,$i=1,2$, then the  Theorem \ref{th 3.5} reduces to convolution
theorem of the WVD-QLCT as follows:
\begin{eqnarray}
\nonumber && \mathcal W^{A_1,A_2}_{f\star g}(t,u)\\\
\nonumber && =\sqrt{2\pi b_1 i}e^{\frac{-i}{2b_1}d_1u_1^2}\sqrt{2\pi b_2 j}e^{\frac{-j}{2b_2}d_2u_2^2}\\\
\nonumber && \times\left\{
\int_{\mathbf R^2}e^{-i\frac{a_1}{b_1}(4w_1(t_1-w_1))}\mathcal W^{A_1,A_2}_{f,f}(w,u)\mathcal W^{A_1,A_2}_{g,g}(t-w,u)\right.\\\
\label{4.10}&& \left. e^{-j\frac{a_2}{b_2}(4w_2(t_2-w_2))}dw
\right\}
\end{eqnarray}
where $ \mathcal W^{A_1,A_2}_{f,f}$ and $\mathcal W^{A_1,A_2}_{g,g}$ is the WVD in the QLCT domain of a signal $f$ and
$g$, respectively.
\item  Changing parameter $A_i =\left[\begin{array}{cccc}a_i & b_i &| & r_i\\c_i & d_i &| & s_i \\\end{array}\right] $ ,$i=1,2$  to $A_i =\left[\begin{array}{cccc}0 & 1 &| & 0\\-1 & 0 &| & 0 \\\end{array}\right] $ ,$i=1,2$, then the  Theorem \ref{th 3.5} reduces to convolution
theorem of the WVD in Quaternion Domain as follows:
\begin{eqnarray}
\nonumber \mathcal W_{f\star g}(t,u)&=&\sqrt{2\pi  i}\left\{
\int_{\mathbf R^2}\mathcal W_{f,f}^{i,j}(w,u)\mathcal W_{g,g}^{i,j}(t-w,u)dw
\right\}\\\
\label{4.11}&&\sqrt{2\pi  j}
\end{eqnarray}
where $ \mathcal W_{f,f}^{i,j}$ and $\mathcal W_{g,g}^{i,j}$ is the WVD in the Quaternion domain of a signal $f$ and
$g$, respectively.
\end{enumerate}
Next, we will derive the correlation theorem in the WVD-QOLCT. Let us define the
correlation for the QOLCT.
\begin{definition}\label{def 4.3}
  For any two quaternion functions $f,g \in L^2(\mathbf R^2,\mathbf H),$ we define the correlation operator of the QOLCT
as
\begin{equation}\label{4.12}(f\circ g)(t)=\int_{\mathbf R^2}e^{i\frac{a_1}{b_1}2z_1(z_1+t_1)}\overline{f(z)}g(z+t)e^{j\frac{a_2}{b_2}2z_2(z_2+t_2)} dz\end{equation}
\end{definition}
Now, we reap a consequence of the above definition .
\begin{theorem}\label{th 4.4}{\bf (WVD-QOLCT Correlation).} For any two quaternion functions $f,g \in L^2{(\mathbf R^2,\mathbf H)},$ the following result holds
\begin{eqnarray}
\nonumber &&\mathcal W^{A_1,A_2}_{f\circ g}(t,u)\\\
\nonumber&& =\sqrt{2\pi b_1 i}e^{\frac{-i}{2b_1}[d_1(u_1^2+r_1^2)+2u_1(d_1r_1-b_1s_1)]}\\\
\nonumber &&\times\left\{\int_{\mathbf R^2}e^{i\frac{a_1}{b_1}(4w_1(t_1+w_1))}\mathcal W^{A_1,A_2}_{f,f}(w,-u)\right.\\\
\nonumber && \left. \mathcal W^{A_1,A_2}_{g,g}(t+w,u)e^{j\frac{a_2}{b_2}(4w_2(t_2+w_2))}dw\right\}\\\
\label{4.13}&&\times\sqrt{2\pi b_2 j}e^{\frac{-j}{2b_2}[d_2(u_2^2+r_2^2)+2u_2(d_2r_2-b_2s_2)]}
\end{eqnarray}
\end{theorem}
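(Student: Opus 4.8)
The plan is to follow the proof of Theorem \ref{th 4.2} essentially line by line, paying close attention to the sign reversals introduced by the conjugation in Definition \ref{def 4.3}. First I would apply the definition of the WVD-QOLCT to $f\circ g$, writing
$$\mathcal W^{A_1,A_2}_{f\circ g}(t,u)=\int_{\mathbf R^2}K^i_{A_1}(n_1,u_1)\left[(f\circ g)\left(t+\frac n2\right)\right]\overline{\left[(f\circ g)\left(t-\frac n2\right)\right]}K^j_{A_2}(n_2,u_2)\,dn,$$
and then substitute the integral representation (\ref{4.12}) for each of the two factors $(f\circ g)(t\pm\frac n2)$, producing a six-fold integral in the variables $z,\gamma,n$.

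Next I would insert the factored forms (\ref{4.6}) and (\ref{4.7}) of the kernels, isolating the pure chirp in $n$, and then carry out the substitution $z_i=w_i+\frac{p_i}{2}$, $\gamma_i=w_i-\frac{p_i}{2}$ followed by $n_i=p_i+q_i$ for $i=1,2$, exactly as in the convolution case. The essential bookkeeping is that, because the correlation carries $\overline{f(z)}g(z+t)$ rather than $f(z)g(t-z)$, the $w$-dependent chirps emerge as $e^{i\frac{a_1}{b_1}4w_1(t_1+w_1)}$ and $e^{j\frac{a_2}{b_2}4w_2(t_2+w_2)}$ (with a $+$ in place of the $-$), and the linear-in-$p$ phase in the $f$-block acquires the opposite sign in its frequency slot, so that after grouping, the $p$-integral reproduces $\mathcal W^{A_1,A_2}_{f,f}(w,-u)$ while the $q$-integral reproduces $\mathcal W^{A_1,A_2}_{g,g}(t+w,u)$; the cross term $e^{i\frac{a_1}{b_1}p_1q_1}$ (resp. $e^{j\frac{a_2}{b_2}p_2q_2}$) is again absorbed into the chirp generated by $n_i=p_i+q_i$. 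Throughout, the $i$-exponentials must be kept to the left and the $j$-exponentials to the right of the quaternion-valued integrand, since $\mathbf H$ is noncommutative; the two-sided structure of the QOLCT is precisely what makes this separation legitimate.

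Finally I would pull the leftover $w$-chirps outside the $p,q$ integrals, multiply both sides on the left by $K^i_{A_1}e^{\frac{i}{2b_1}[d_1u_1^2+2u_1(d_1r_1-b_1s_1)]}$ and on the right by $K^j_{A_2}e^{\frac{j}{2b_2}[d_2u_2^2+2u_2(d_2r_2-b_2s_2)]}$, and invoke (\ref{4.6}) and (\ref{4.7}) once more to solve for $\mathcal W^{A_1,A_2}_{f\circ g}(t,u)$, which yields (\ref{4.13}). I expect the principal obstacle to be exactly the sign accounting that produces the evaluation at $-u$ in the first factor: one must verify that conjugating $f$ in (\ref{4.12}) flips the linear term $2p_1(r_1-u_1)\mapsto 2p_1(r_1+u_1)$, i.e.\ $u_1\mapsto-u_1$, in the first component (and analogously in the second) without disturbing the quadratic term $a_1p_1^2$ or the $g$-block. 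A secondary delicate point is maintaining the order of the quaternionic exponential factors under the successive changes of variables, so that the identification with $\mathcal W^{A_1,A_2}_{f,f}$ and $\mathcal W^{A_1,A_2}_{g,g}$ remains valid.
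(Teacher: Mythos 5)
Your overall strategy matches the paper's proof: expand $\mathcal W^{A_1,A_2}_{f\circ g}$ via Definition \ref{def 4.3}, factor the kernels as in (\ref{4.6})--(\ref{4.7}), decouple the resulting six-fold integral by a change of variables, and read off the two auto-WVDs before clearing the leftover chirps. But the one step you specify in detail is the step that fails: the substitution $n_i=p_i+q_i$ does not work for the correlation. With $z=w+\frac{p}{2}$, the $g$-factor coming from $(f\circ g)(t+\frac{n}{2})$ is $g\big(z+t+\frac{n}{2}\big)=g\big(t+w+\frac{p}{2}+\frac{n}{2}\big)$, so taking $n=p+q$ leaves $g\big(t+w+p+\frac{q}{2}\big)$: the variable $p$ does not drop out of the $g$-block, the integral does not factor into a $p$-integral times a $q$-integral, and the cross term no longer cancels (the phase $p_1^2+n_1p_1$ produced by the two weight functions becomes $2p_1^2+p_1q_1$ under $n_1=p_1+q_1$, which is not absorbed by $\frac{a_1}{2b_1}n_1^2$). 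The substitution the paper uses is $n_i=q_i-p_i$; this gives $g\big(t+w+\frac{q}{2}\big)\overline{g\big(t+w-\frac{q}{2}\big)}$, produces the cross term $e^{i\frac{a_1}{b_1}p_1q_1}$ that is exactly cancelled by the $-\frac{a_1}{b_1}p_1q_1$ arising from $\frac{a_1}{2b_1}(q_1-p_1)^2$, and flips the linear term of the $p$-integral to $-2p_1(r_1-u_1)$, which is what yields the evaluation at $-u$ in $\mathcal W^{A_1,A_2}_{f,f}(w,-u)$.

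So the conclusion you anticipate (the $+$ signs in the $w$-chirps, the arguments $(w,-u)$ and $(t+w,u)$) is correct, but it cannot be reached by the route you wrote down; ``exactly as in the convolution case'' is precisely the point where the correlation proof must deviate from Theorem \ref{th 4.2}. The rest of your outline --- keeping the $i$-exponentials on the left and the $j$-exponentials on the right of the quaternion-valued integrand, and the final multiplication by kernel chirps to solve for $\mathcal W^{A_1,A_2}_{f\circ g}$ --- is in line with the paper's argument.
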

\begin{proof} Applying the definition of the WVD-QOLCT  we have
\begin{eqnarray}
\nonumber && \mathcal W^{A_1,A_2}_{f\circ g}(t,u)\\\
\nonumber &&=\int_{{\mathbf R}^2}{K^{i }_{A_1}(n_1,u_1)}\left[(f\circ g)(t+\frac{n}{2})\right]\\\
\label{4.14}&&\left[\overline{f}\circ\overline{g}(t-\frac{n}{2})\right]K^{j }_{A_2}(n_2,u_2)dn\end{eqnarray}
Now using definition \ref{def 4.3} in (\ref{4.14}) we have
\begin{eqnarray}
\nonumber &&\mathcal W^{A_1,A_2}_{f\circ g}(t,u)\\\
\nonumber &&=\int_{{\mathbf R}^2}{K^{i }_{A_1}(n_1,u_1)}\left\{\int_{{\mathbf R}^2}e^{i\frac{a_1}{b_1}2z_1(z_1+(t_1+\frac{n_1}{2}))}\overline{f(z)}\right. \\\
\nonumber && g(z+t+\frac{n}{2})e^{j\frac{a_2}{b_2}2z_2(z_2+(t_2+\frac{n_2}{2}))}dz\\\
\nonumber &&\;\times\int_{{\mathbf R}^2}e^{i\frac{a_1}{b_1}2\gamma_1(\gamma_1+(t_1-\frac{n_1}{2}))}\overline{\overline{f(\gamma)}g(\gamma+(t-\frac{n}{2}))}\\\
\label{4.15}&& \left. e^{j\frac{a_2}{b_2}2\gamma_2(\gamma_2+(t_2-\frac{n_2}{2}))}d\gamma \right\}{K^{j }_{A_2}(n_2,u_2)}dn
\end{eqnarray}
Now with the help of (\ref{4.6}) and (\ref{4.7}),we have from (\ref{4.15})
\begin{eqnarray*}
&&\mathcal W^{A_1,A_2}_{f\circ g}(t,u)\\\
&&=\int_{{\mathbf R^6}}K^i_{A_1}e^{\frac{i}{2b_1}[a_1n_1^2+2n_1(r_1-u_1)-2u_1(d_1r_1-b_1s_1)+d_1u_1^2]}\\\
&& \;e^{i\frac{a_1}{b_1}2z_1(z_1+(t_1+\frac{n_1}{2}))}\\\
&&\;\times \overline{f(z)}g(z+(t+\frac{n}{2}))
 e^{j\frac{a_2}{b_2}2z_2(z_2+(t_2+\frac{n_2}{2}))}
 e^{i\frac{a_1}{b_1}2\gamma_1(\gamma_1+(t_1-\frac{n_1}{2}))}\\\
&&\;\times \overline{\overline{f(\gamma)}g(\gamma+(t-\frac{n}{2}))}e^{j\frac{a_2}{b_2}2\gamma_2(\gamma_2+(t_2-\frac{n_2}{2}))}\\\
&&\;\times K^j_{A_2}e^{\frac{j}{2b_2}[a_2n_2^2+2n_2(r_2-u_2)-2u_2(d_2r_2-b_2s_2)+d_2u_2^2]}dzd\gamma dn
\end{eqnarray*}
Setting $z_i=w_i+\frac{p_i}{2},\gamma_i=w_i-\frac{p_i}{2},i=1,2$, we get
\begin{eqnarray}
\nonumber &&\mathcal W^{A_1,A_2}_{f\circ g}(t,u)\\\
\nonumber &&=\int_{{\mathbf R^6}}K^i_{A_1}e^{\frac{i}{2b_1}[a_1n_1^2+2n_1(r_1-u_1)-2u_1(d_1r_1-b_1s_1)+d_1u_1^2]}\\\
\nonumber &&e^{i\frac{a_1}{b_1}2\left(w_1+\frac{p_1}{2}\right)\left((t_1+\frac{n_1}{2})+(w_1+\frac{p_1}{2})\right)}\\\
\nonumber &&\; \times \overline{f\left(w+\frac{p}{2}\right)}g\left((t+\frac{n}{2})+(w+\frac{p}{2})\right)\\\
\nonumber &&e^{j\frac{a_2}{b_2}2\left(w_2+\frac{p_2}{2}\right)\left((t_2+\frac{n_2}{2})+(w_2+\frac{p_2}{2})\right)}\\\
\nonumber && \; \times e^{i\frac{a_1}{b_1}2\left(w_1-\frac{p_1}{2}\right)\left((t_1-\frac{n_1}{2})+(w_1-\frac{p_1}{2})\right)}f\left(w-\frac{p}{2}\right)\\\
\nonumber &&\overline{g\left((t-\frac{n}{2})+(w-\frac{p}{2})\right)}\\\
\nonumber &&\times e^{j\frac{a_2}{b_2}2\left(w_2-\frac{p_2}{2}\right)\left((t_2-\frac{n_2}{2})+(w_2-\frac{p_2}{2})\right)}K^j_{A_2}\\\
\label{4.16} &&e^{\frac{j}{2b_2}[a_2n_2^2+2n_2(r_2-u_2)-2u_2(d_2r_2-b_2s_2)+d_2u_2^2]}dpdq dw
\end{eqnarray}
Now put $n_i=q_i-p_i,i=1,2$ and on following the same procedure as followed in previous Theorem \ref{th 4.2}, we have from (\ref{4.16})
\begin{eqnarray}
\nonumber &&\mathcal W^{A_1,A_2}_{f\circ g}(t,u)\\\
\nonumber &&=\int_{{\mathbf R^2}}\left[\int_{{\mathbf R^2}}e^{\frac{i}{2b_1}[a_1p_1^2-2p_1(r_1-u_1)]}\overline{f\left(w+\frac{p}{2}\right)}f\left(w-\frac{p}{2}\right)\right.\\\
\nonumber && \left. e^{\frac{j}{2b_2}[a_2p_2^2-2p_2(r_2-u_2)]}dp\right]\\\
\nonumber && \times \left[\int_{{\mathbf R^2}}K^i_{A_1}e^{\frac{i}{2b_1}[a_1q_1^2+2q_1(r_1-u_1)-2u_1(d_1r_1-b_1s_1)+d_1u_1^2]}\right.\\\
\nonumber &&\times g\left(t+w+\frac{q}{2}\right) \overline{g\left(t+w-\frac{q}{2}\right)}\\\
\nonumber &&\times\left. K^j_{A_2}e^{\frac{j}{2b_2}[a_2q_2^2+2q_2(r_2-u_2)-2u_2(d_2r_2-b_2s_2)+d_2u_2^2]}dq\right]\\\
\label{4.17}&&\times e^{i\frac{a_1}{b_1}\left(4w_1(t_1+w_1)\right)}e^{j\frac{a_2}{b_2}\left(4w_2(t_2+w_2)\right)}dw
\end{eqnarray}
On multiplying (\ref{4.17}) both sides by\\ $K^i_{A_1}e^{\frac{i}{2b_1}[d_1u_1^2-2u_1(d_1r_1-b_1s_1)+4p_1(r_1-u_1)]}$ and \\ $K^j_{A_2}e^{\frac{j}{2b_2}[d_2u_2^2-2u_2(d_2r_2-b_2s_2)+4p_2(r_2-u_2)]}$,we get
\begin{eqnarray}
\nonumber && K^i_{A_1}e^{\frac{i}{2b_1}[d_1u_1^2-2u_1(d_1r_1-b_1s_1)+4p_1(r_1-u_1)]}\\\
\nonumber && K^j_{A_2}e^{\frac{j}{2b_2}[d_2u_2^2-2u_2(d_2r_2-b_2s_2)+4p_2(r_2-u_2)]}\mathcal W^{A_1,A_2}_{f\circ g}(t,u)\\\
\nonumber &&=\int_{{\mathbf R^2}}e^{-i\frac{a_1}{b_1}\left(4w_1(t_1-w_1)\right)}\mathcal W^{A_1,A_2}_{f,f}(w,u)\\\
\label{4.18}&&\mathcal W^{A_1,A_2}_{g,g}(t-w,u)e^{-j\frac{a_2}{b_2}\left(4w_2(t_2-w_2)\right)}dw
\end{eqnarray}
Now using (\ref{4.6}) and (\ref{4.7}) in (\ref{4.18}) we obtain,
\begin{eqnarray*}
&&\mathcal W^{A_1,A_2}_{f\star g}(t,u)\\\
&&=\sqrt{2\pi b_1 i}e^{\frac{-i}{2b_1}[d_1(u_1^2+r_1^2)+2u_1(d_1r_1-b_1s_1)]}\\\
&&\times\left\{\int_{\mathbf R^2}e^{i\frac{a_1}{b_1}(4w_1(t_1+w_1))}\mathcal W^{A_1,A_2}_{f,f}(w,-u)\mathcal W^{A_1,A_2}_{g,g}(t+w,u)\right.\\\
&& \times\left. e^{j\frac{a_2}{b_2}(4w_2(t_2+w_2))}dw\right\}\sqrt{2\pi b_2 j}e^{\frac{-j}{2b_2}[d_2(u_2^2+r_2^2)+2u_2(d_2r_2-b_2s_2)]}
\end{eqnarray*}
which completes the proof of theorem. \end{proof}
{\bf Consequences of Theorem \ref{th 4.4}.}
\begin{enumerate}
\item  Changing parameter $A_i =\left[\begin{array}{cccc}a_i & b_i &| & r_i\\c_i & d_i &| & s_i \\\end{array}\right] $ ,$i=1,2$  to $A_i =\left[\begin{array}{cccc}a_i & b_i &| & 0\\c_i & d_i &| & 0 \\\end{array}\right] $ ,$i=1,2$, then the  Theorem \ref{th 4.4} reduces to  correlation 
theorem of the WVD-QLCT as follows:
\begin{eqnarray*}
&&\mathcal W^{A_1,A_2}_{f\circ g}(t,u)\\\
&&=\sqrt{2\pi b_1 i}e^{\frac{-i}{2b_1}d_1u_1^2}\sqrt{2\pi b_2 j}e^{\frac{-j}{2b_2}d_2u_2^2}\\\
&&\times\left\{
\int_{\mathbf R^2}e^{i\frac{a_1}{b_1}(4w_1(t_1+w_1))}\mathcal W^{A_1,A_2}_{f,f}(w,-u)\right.\\\
&& \left.\mathcal W^{A_1,A_2}_{g,g}(t+w,u)e^{j\frac{a_2}{b_2}(4w_2(t_2+w_2))}dw
\right\}
\end{eqnarray*}
where $ \mathcal W^{A_1,A_2}_{f,f}$ and $\mathcal W^{A_1,A_2}_{g,g}$ is the WVD in the QLCT domain of a signal $f$ and
$g$, respectively.
\item Changing parameter $A_i =\left[\begin{array}{cccc}a_i & b_i &| & r_i\\c_i & d_i &| & s_i \\\end{array}\right] $ ,$i=1,2$  to $A_i =\left[\begin{array}{cccc}0 & 1 &| & 0\\-1 & 0 &| & 0 \\\end{array}\right] $ ,$i=1,2$, then the  Theorem 4.4 reduces to correlation
theorem of the WVD in Quaternion Domain as follows:
\begin{eqnarray*}
\mathcal W_{f\circ g}(t,u)&=&\sqrt{2\pi  i}\left\{
\int_{\mathbf R^2}\mathcal W_{f,f}^{i,j}(w,-u)\mathcal W_{g,g}^{i,j}(t+w,u)dw
\right\}\sqrt{2\pi  j}
\end{eqnarray*}
where $ \mathcal W_{f,f}^{i,j}$ and $\mathcal W_{g,g}^{i,j}$ is the WVD in the Quaternion domain of a signal $f$ and
$g$, respectively.
\end{enumerate}
\begin{acknowledgements}
This work is supported by the UGC-BSR Research Start Up Grant(No. F.30-498/2019(BSR)) provided by UGC, Govt. of India.
\end{acknowledgements}
 \section*{Conflict of interest}
 The authors declare that they have no conflict of interest.
\newpage



\end{document}